\documentclass[11pt]{article}

\usepackage{acl}

\usepackage{times}
\usepackage{latexsym}
\usepackage[T1]{fontenc}
\usepackage[utf8]{inputenc}
\usepackage{microtype}
\usepackage{graphicx}
\usepackage{amsmath,amssymb,amsfonts}
\usepackage{amsthm}
\newtheorem{proposition}{Proposition}
\usepackage{booktabs}
\usepackage{multirow}
\usepackage{listings}
\usepackage{footnote}

\title{Flip, Don't Shuffle: Watermarking LLMs at the Speed of Inference}

\author{
  Simone Ceppi \\
  European Commission \\
  Joint Research Centre \\
  Ispra, Italy \\
  \texttt{simone.ceppi@ec.europa.eu}
  \And
  Ignacio Sanchez \\
  European Commission \\
  Joint Research Centre \\
  Ispra, Italy \\
  \texttt{ignacio.sanchez@ec.europa.eu}
}

\begin{document}
\maketitle
\begin{abstract}
We introduce Stateless Bernoulli Watermarking (SBW), a new statistical watermark for Large Language Models that determines green list membership through independent per-token Bernoulli trials. Unlike KGW's vocabulary permutation or SynthID's multi-layer tournament, SBW requires only a single comparison per token against a counter-based random number generator, reducing membership complexity to $O(1)$ and enabling single-kernel execution with zero intermediate allocations. We prove that this formulation preserves the same detection guarantees as fixed-size green lists: the z-score test remains $\mathcal{N}(0,1)$ under the null. The stateless architecture enables capabilities unavailable to existing methods: full-vocabulary self-salt watermarking (over 6000$\times$ faster than KGW's self-salt and 2$\times$ faster than SynthID despite biasing the entire vocabulary with candidate-dependent seeding) and architectural compatibility with distributed inference. In end-to-end generation benchmarks, SBW adds less than 1\% overhead at all batch sizes. We additionally identify hash function design as a previously unexplored axis for watermark quality, showing that a GPU-native Jenkins hash improves null calibration by 1.8$\times$ while producing more diverse text. Experiments across two seeding schemes and eight $(\gamma, \delta)$ configurations confirm statistical equivalence with ROC-AUC differences below 0.01.
\end{abstract}

\section{Introduction and State of the Art}

As Large Language Models (LLMs) become integrated into the global information infrastructure, the ability to distinguish machine-generated text from human-written content has become a critical safety requirement. Emerging regulations across multiple jurisdictions now have started to require providers of AI systems to mark their output as machine-generated, making it urgent to research watermarking solutions deployable in production without incurring the so-called ``watermark tax'' (i.e.\ the significant computational and temporal overhead related to applying the watermark during inference).

Different methods exist to apply a watermark during text generation~\cite{liu2024taxonomy,zhang2024watermarking}. These include adversarial learning approaches that train a model to embed a retrievable signal~\cite{abdelnabi2021awt,li2023plmmark}, with recent work showing that decoding-based watermarks can also be distilled into model weights~\cite{gu2024learnability}, post-hoc methods that modify text after generation (e.g., synonym substitution~\cite{munyer2023natural}, semantic word insertion~\cite{chang2024postmark}), and statistical methods that modify the token sampling process itself~\cite{kirchenbauer2023watermark,christ2023undetectable,kuditipudi2023robust,hu2023unbiased,wu2024dipmark,dathathri2024scalable,zhao2025permute}. This work focuses on the last category, introduced by \citet{kirchenbauer2023watermark}, where green list biasing preserves output quality while showing resilience to basic paraphrasing attacks~\cite{kirchenbauer2024reliability}, though adaptive attackers can degrade detection rates~\cite{diaa2025adaptive,rastogi2024revisiting} and even steal watermark schemes from API access~\cite{jovanovic2024stealing}. Google's SynthID-Text~\cite{dathathri2024scalable} demonstrated production-scale deployment, confirming industry demand for efficient watermarking. However, KGW involves an inherently unfusible operation (vocabulary permutation via \texttt{randperm}), while SynthID requires $m=30$ sequential reweighting passes that impose per-token compute proportional to tournament depth. As vocabulary sizes grow beyond 100K tokens and deployments scale to millions of concurrent users, a watermarking method with $O(1)$ per-token cost compiling into a single fused kernel with no intermediate allocations would eliminate this overhead entirely.

\newpage

\textbf{Contributions.} We introduce \emph{Stateless Bernoulli Watermarking} (SBW), which replaces KGW's vocabulary permutation and SynthID's tournament passes with independent per-token Bernoulli trials: a local, stateless decision. We prove this preserves identical detection guarantees under the standard null model (Section~\ref{sec:theory}); experiments confirm this equivalence holds empirically, where SBW and KGW deviate identically from the idealized $\mathcal{N}(0,1)$ on real LLM generations (Section~\ref{sec:experiments}). The stateless formulation enables: (1) single-kernel $O(1)$ watermarking with zero intermediate allocation, (2) true full-vocabulary self-salt without top-$k$ approximation, and (3) compatibility with distributed inference (Section~\ref{sec:capabilities}). In benchmarks, SBW achieves 2$\times$ lower latency than SynthID and over 6000$\times$ lower than KGW, while being the first method to support full-vocabulary self-salt at production scale (Section~\ref{sec:scaling}). We provide a production-ready implementation integrated with vLLM~\cite{kwon2023efficient}. We additionally identify hash function design as a previously unexplored factor in watermark quality (Section~\ref{sec:hash}).

\textbf{Code.} Our implementation, including the vLLM logits processor, is available at \url{https://github.com/si-mon-jinn/sbw} (\texttt{pip install sbw}); watermark evaluation library used to run experiments at \url{https://github.com/si-mon-jinn/waterpipe}; paper source at \url{https://github.com/si-mon-jinn/flip-dont-shuffle}.

\section{Stateless Bernoulli Watermarking}
\label{sec:method}

\subsection{From Global Operations to Local Decisions}

Existing statistical watermarks involve costly additional computation at each generation step: KGW calls \texttt{randperm(V)} to partition the vocabulary, while SynthID applies $m$ sequential reweighting passes over sampled candidates. These approaches introduce overhead through: (1) per-token complexity that scales with vocabulary size ($O(V \log V)$ for KGW) or tournament depth ($O(k \cdot m)$ for SynthID), and (2) allocation of intermediate tensors (a full $(B, V)$ permutation for KGW; $(B, k, m)$ g-values for SynthID) that compete with KV-cache for GPU memory. Our method eliminates both by reconceptualizing green list membership as a local, stateless computation. 

\subsection{Bernoulli Green List Construction}

We propose \textbf{independent Bernoulli selection} for green list construction. For each token step $t$, we derive a seed $r_t$ from the preceding context via a pseudorandom function (PRF). Instead of shuffling the entire vocabulary, we treat each token $v \in V$ as an independent candidate for the green list $G_t$ via a threshold test:
\begin{equation}
G_t = \{ v \in V \mid \text{CBRNG}(v, r_t) < \gamma \}.
\end{equation}

The modified logits $l'_t$ are computed as in the KGW method:
\begin{equation}
l'_{t,v} = \begin{cases} l_{t,v} + \delta & \text{if } v \in G_t \\ l_{t,v} & \text{if } v \notin G_t \end{cases}.
\end{equation}

The key insight is that a counter-based random number generator (CBRNG), given a seed and a position, computes the random value at that position in $O(1)$ without computing any preceding values. This allows us to determine the ``greenness'' of any token ID independently, turning a global coordination problem into a local, parallelizable computation. Our implementation uses Philox 4x32-10~\cite{salmon2011parallel}, a CBRNG with strong statistical properties and native GPU support.

The practical benefits are substantial:
\begin{itemize}
\item \textbf{Generation:} Per-token complexity drops from $O(V \log V)$ (KGW) or $O(k \cdot m)$ (SynthID) to $O(V)$ with a single fused kernel, enabling significant speedups (Section~\ref{sec:fusion}).
\item \textbf{Self-salt schemes:} Full-vocabulary evaluation becomes tractable. KGW's self-salt requires a separate \texttt{randperm} per candidate, and SynthID does not support self-salt (Section~\ref{sec:selfsalt}).
\item \textbf{Detection:} Each token can be tested without computing the full green list, reducing per-token detection complexity from $O(V \log V)$ (KGW) or $O(m)$ (SynthID) to $O(1)$.
\end{itemize}

The theoretical cost is that $|G_t|$ becomes a random variable with $\mathbb{E}[|G_t|] = \gamma |V|$ rather than exactly $\gamma |V|$. Section~\ref{sec:theory} formally proves this has no impact on detection power, and Section~\ref{sec:zscore} confirms empirically that z-score distributions are indistinguishable.

\subsection{Jenkins Integer Hash}

For seeding schemes that hash token IDs (\texttt{selfhash}, \texttt{minhash}, \texttt{skipgram}), KGW uses a pre-generated permutation table to map tokens to ${\sim}10^6$ distinct values. This requires a memory lookup per hash call.

We replace this with the Bob Jenkins integer hash~\cite{jenkins1997hash}, that compiles into pure ALU instructions with full throughput avoiding scattered memory accesses that thrash GPU caches. This substitution does not affect watermark security, which rests on the secret key. The larger output range also has consequences for watermark quality, which we analyze in Section~\ref{sec:hash}.

\section{Theoretical Analysis of Equivalence}
\label{sec:theory}

We prove that replacing the fixed-size green list of KGW with stochastic Bernoulli membership preserves the detection z-score exactly.

\begin{proposition}
Let each token $i \in V$ be included in the green list $G_t$ independently with $X_i \sim \mathrm{Bernoulli}(\gamma)$, so the green list proportion
\begin{equation}
\Gamma_t = \frac{1}{|V|}\sum_{i=1}^{|V|} X_i
\end{equation}
is stochastic with $\mathrm{E}[\Gamma_t] = \gamma$. Under $H_0$ (text generated without watermarking, so the model is unaware of $G_t$), the per-token green indicator $Y_t$ satisfies $\mathrm{E}[Y_t] = \gamma$ and $\mathrm{Var}(Y_t) = \gamma(1-\gamma)$, identical to the fixed-size case.
\end{proposition}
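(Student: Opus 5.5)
The plan is to condition on the generated token and use the independence between the model's choice and the key-dependent green list. Fix a step $t$ and let $s_t \in V$ be the token actually emitted. Under $H_0$ the text is produced without knowledge of the key. The seed $r_t$ is derived from the context by the PRF, so I model the membership indicators $(X_i)_{i\in V}$ as independent $\mathrm{Bernoulli}(\gamma)$ variables, as in the statement. They are moreover independent of $s_t$, since $s_t$ is a function only of the context and the model's own randomness. The per-token green indicator is then $Y_t = X_{s_t} = \sum_{i\in V} \mathbf{1}[s_t = i]\, X_i$.

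First I compute the conditional mean. By independence, $\mathrm{E}[Y_t \mid s_t = i] = \mathrm{E}[X_i] = \gamma$ for every $i$. The tower property then gives $\mathrm{E}[Y_t] = \sum_i \Pr(s_t=i)\,\gamma = \gamma$, whatever the model's next-token distribution is.

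Second I compute the variance. Since $Y_t \in \{0,1\}$, we have $Y_t^2 = Y_t$, so $\mathrm{Var}(Y_t) = \gamma - \gamma^2 = \gamma(1-\gamma)$. Equivalently, $Y_t$ is itself $\mathrm{Bernoulli}(\gamma)$, because conditionally on every value of $s_t$ it is $\mathrm{Bernoulli}(\gamma)$. This is exactly the law of the fixed-size case, where a uniform random permutation gives $\Pr(s_t \in G_t) = \gamma|V|/|V| = \gamma$. The randomness of $\Gamma_t$ never enters: only the marginal of the single coordinate $X_{s_t}$ matters. The cross-token dependence in $\Gamma_t$ (its variance $\gamma(1-\gamma)/|V|$) is irrelevant, because we never sum over the vocabulary at detection time.

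Third, to connect with the z-score claim, I note that distinct steps with distinct seeds $r_t$ yield independent indicator families. This holds under the standard idealization that repeated contexts are deduplicated or ignored. The $Y_t$ are then i.i.d.\ $\mathrm{Bernoulli}(\gamma)$ under $H_0$, so $z = (\sum_t Y_t - \gamma T)/\sqrt{T\gamma(1-\gamma)}$ has exactly the same null law as in KGW, and it is asymptotically $\mathcal{N}(0,1)$ by the CLT.

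The main obstacle is justifying the independence of $s_t$ from $(X_i)$. I would handle it by stating the idealization explicitly: the CBRNG keyed by a secret seed is treated as a random oracle, and under $H_0$ the generating process has no access to the key. With that modeling assumption, the rest is the one-line conditioning argument above.
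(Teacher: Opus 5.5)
Your proof is correct, and it takes a genuinely different route from the paper.

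\textbf{The paper's route.} It conditions on the green-list proportion $\Gamma_t$. It first uses exchangeability of the i.i.d.\ construction: given $|G_t|=m$, every token is green with probability $m/|V|$, so $P(Y_t=1\mid\Gamma_t)=\Gamma_t$. It then takes the law of total expectation for the mean and the law of total variance for the variance, where the $E[\Gamma_t(1-\Gamma_t)]$ term and the $\mathrm{Var}(\Gamma_t)$ term combine so that the randomness of $|G_t|$ cancels.

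\textbf{Your route.} You condition on the emitted token $s_t$ instead. You use only two facts: each coordinate $X_i$ is marginally $\mathrm{Bernoulli}(\gamma)$, and the whole vector $(X_i)_{i\in V}$ is independent of $s_t$. The variance then follows from $Y_t^2=Y_t$ with no decomposition at all.

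\textbf{What each buys.} Your argument is shorter and isolates what is actually needed. Neither independence across the vocabulary nor the exchangeability step is used. So the same two lines cover KGW's uniform-permutation list, SBW's Bernoulli list, and any construction with per-token inclusion probability $\gamma$. It also shows that the paper's cancellation is not a coincidence: it is forced, because a $\{0,1\}$ variable with mean $\gamma$ must have variance $\gamma(1-\gamma)$. The paper's decomposition does give an explicit account of where the extra randomness in $|G_t|$ goes. That speaks directly to the worry motivating the proposition, namely that a random-size green list might inflate the null variance.

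Your explicit random-oracle idealization and your caveat about repeated seeds are more careful than the paper's independence argument. So is your observation that the CLT gives only asymptotic normality. The paper asserts $z\sim\mathcal{N}(0,1)$ directly, whereas $W$ is exactly binomial and $z$ is only approximately normal.
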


\begin{proof}[Proof sketch]
By symmetry of the i.i.d.\ construction, $P(Y_t = 1 \mid \Gamma_t) = \Gamma_t$ (Appendix~\ref{app:derivation}). By the Law of Total Expectation, $\mathrm{E}[Y_t] = \mathrm{E}[\Gamma_t] = \gamma$. Applying the Law of Total Variance:
\begin{align}
\mathrm{Var}(Y_t)
  &= \underbrace{E[\Gamma_t(1{-}\Gamma_t)]}_{\text{cond.\ variance}}
  + \underbrace{\mathrm{Var}(\Gamma_t)}_{\text{variance of mean}} \nonumber\\
  &= E[\Gamma_t] - E[\Gamma_t^2] + \mathrm{Var}(\Gamma_t).
\end{align}
Substituting $E[\Gamma_t^2] = \mathrm{Var}(\Gamma_t) + \gamma^2$, the $\mathrm{Var}(\Gamma_t)$ terms cancel, yielding $\mathrm{Var}(Y_t) = \gamma - \gamma^2 = \gamma(1-\gamma)$.
\end{proof}

Since the $Y_t$'s are mutually independent under $H_0$ (Appendix~\ref{app:derivation}), the green token count $W = \sum_{t=1}^T Y_t$ is binomial with parameters $(T, \gamma)$, and the standard KGW z-score applies unchanged:
\begin{equation}
z = \frac{W - T\gamma}{\sqrt{T\gamma(1-\gamma)}} \sim \mathcal{N}(0,1).
\end{equation}

At detection, the stochastic formulation preserves the null distribution exactly, inheriting the asymptotic optimality of the z-score test for green-list watermarks~\cite{li2025statistical}, while enabling an $O(1)$ per-token membership test that makes kernel fusion possible (Section~\ref{sec:fusion}). The full derivation, including concentration bounds and independence proof, is in Appendix~\ref{app:derivation}.

\section{Numerical Experiments}
\label{sec:experiments}

To validate our theoretical equivalence claims, we conducted systematic experiments comparing KGW against Stateless Bernoulli Watermarking across z-score distributions (Section~\ref{sec:zscore}), detection accuracy (Section~\ref{sec:roc}), and text quality (Section~\ref{sec:perplexity}).

\subsection{Experimental Setup}

\textbf{Models and data.} We used Qwen3-8B~\cite{yang2025qwen3} (vocabulary size 151,936) for generation and Qwen3.5-27B~\cite{qwen2025qwen35omni} as an external perplexity judge. Prompts (500) were sampled from the C4 validation split~\cite{raffel2020exploring}, each truncated to 30 tokens. All experiments ran on a single NVIDIA RTX 3090 using temperature $T=1.0$ (pure sampling without top-p truncation). Full reproducibility details are in Appendix~\ref{app:reproducibility}. Validation on Falcon-7B with different sampling ($T{=}0.7$, $top\_p{=}0.8$) and hardware (A6000) confirms the results generalize beyond this setup (Appendix~\ref{app:generalization}).

\textbf{Watermark schemes.} We focus on two seeding strategies from \citet{kirchenbauer2023watermark}: context-only seeding (\texttt{simple\_1}, context width 1) and self-salt seeding (\texttt{selfhash}, context width 4 plus candidate token). Table~\ref{tab:schemes} summarizes all schemes evaluated. Statistical equivalence (this section) is validated on SBW-1 and SBW-ss, which match the KGW baselines in context width and candidate coverage. Performance benchmarks (Section~\ref{sec:scaling}) additionally include full-vocabulary variants (SBW-4, SBW-ss-V) that bias the entire 151K vocabulary.

\begin{table}[!t]
\caption{Watermark Schemes Evaluated. Top: statistical equivalence experiments. Bottom: performance benchmarks only.}
\label{tab:schemes}
\begin{center}
\footnotesize
\resizebox{\columnwidth}{!}{
\begin{tabular}{@{}lccccc@{}}
\toprule
Scheme & Green list & Context & Self-salt & Candidates & Section \\
\midrule
\texttt{simple\_1} & randperm & width 1 & No & full $V$ & \ref{sec:experiments} \\
SBW-1 & Bernoulli & width 1 & No & full $V$ & \ref{sec:experiments} \\
\texttt{selfhash} & randperm & width 4 & Yes & top-40 & \ref{sec:experiments} \\
SBW-ss & Bernoulli & width 4 & Yes & top-40 & \ref{sec:experiments} \\
\midrule
SBW-4 & Bernoulli & width 4 & No & full $V$ & \ref{sec:scaling} \\
SBW-ss-V & Bernoulli & width 4 & Yes & full $V$ & \ref{sec:scaling} \\
\bottomrule
\end{tabular}
}
\end{center}
\vspace{-2mm}
\end{table}

All SBW variants use fused compiled kernels; SBW-ss additionally uses Jenkins hash instead of KGW's permutation table.

\textbf{Parameter grid.} We evaluated each scheme across a grid of watermark strength parameters:
\begin{itemize}
\item $\gamma \in \{0.25, 0.50\}$: green list fraction (smaller = more constrained vocabulary)
\item $\delta \in \{1, 2, 5, 10\}$: logit bias strength (larger = stronger watermark signal)
\end{itemize}

\subsection{Z-Score Distribution Analysis}
\label{sec:zscore}

We verify that our Bernoulli method produces indistinguishable z-score distributions using Kolmogorov-Smirnov (KS) tests and moment comparisons.

\subsubsection{Without Self-Salt (\texttt{simple\_1} vs SBW-1)}

\textbf{Non-watermarked text.} We pool across all $\delta$ values (thus a sample with $N=4000$ completions). Under $H_0$ (no watermark applied, model unaware of $G_t$), the z-score of non-watermarked text should follow $\mathcal{N}(0,1)$. We first test each scheme's conformance using one-sample KS tests (Table~\ref{tab:ks_onesample}, top panel; Fig.~\ref{fig:simple1}, top right). Both schemes reject $\mathcal{N}(0,1)$ for both $\gamma$; this is expected~\cite{kirchenbauer2023watermark}: real LLM-generated text exhibits correlations between token distributions and hash-based green list assignments. The KS statistics are comparable between schemes (0.050--0.068), and the empirical means and standard deviations closely match, confirming our theoretical prediction. More directly, two-sample KS tests between the schemes show negligible effect sizes (Appendix~\ref{app:ks}, Table~\ref{tab:ks_twosample_nowm}): 0.026 (p=0.14) at $\gamma=0.25$ and 0.042 (p=0.002) at $\gamma=0.50$. While the p-value at $\gamma=0.50$ falls below 0.05 (expected given $N=4000$, which gives the test power to detect trivial differences), the maximum CDF difference is under 5 percentage points, confirming equivalent null distributions.

\begin{table}[!t]
\caption{One-Sample KS Tests vs $\mathcal{N}(0,1)$ (Non-Watermarked)}
\label{tab:ks_onesample}
\begin{center}
\footnotesize
\resizebox{\columnwidth}{!}{\begin{tabular}{@{}cccccc@{}}
\toprule
$\gamma$ & Scheme & Mean & Std & KS & p-value \\
\midrule
\multicolumn{6}{c}{\textit{Without self-salt}} \\
\midrule
0.25 & \texttt{simple\_1} & +0.120 & 0.991 & 0.062 & 8.65e-14 \\
0.25 & \texttt{SBW-1} & +0.091 & 0.965 & 0.057 & 7.29e-12 \\
0.50 & \texttt{simple\_1} & +0.053 & 1.036 & 0.053 & 2.16e-10 \\
0.50 & \texttt{SBW-1} & +0.138 & 0.995 & 0.068 & 1.05e-16 \\
\midrule
\multicolumn{6}{c}{\textit{With self-salt}} \\
\midrule
0.25 & \texttt{selfhash} & $-$0.518 & 1.155 & 0.239 & $< 10^{-267}$ \\
0.25 & \texttt{SBW-ss} & $-$0.272 & 1.159 & 0.137 & $< 10^{-66}$ \\
0.50 & \texttt{selfhash} & $-$0.205 & 1.262 & 0.139 & $< 10^{-68}$ \\
0.50 & \texttt{SBW-ss} & $-$0.096 & 1.140 & 0.087 & 7.02e-27 \\
\bottomrule
\end{tabular}}
\end{center}
\vspace{-2mm}
{\small $N=4000$ pooled across $\delta$ and $\gamma$ values.}
\end{table}

\textbf{Watermarked text.} We now verify that the watermark signal (the z-score shift induced by green list biasing) is preserved by our method. Table~\ref{tab:zscore_wm} (top panel) reports summary statistics and two-sample t-tests comparing the mean z-scores. The results confirm that both methods produce statistically equivalent watermark strength: 7 of 8 t-tests fail to reject the null hypothesis of equal means. The single significant difference is small in magnitude relative to the mean (Table~\ref{tab:zscore_wm}). The standard deviations also match closely, further validating our theoretical moment equivalence.

\begin{table}[!t]
\caption{Watermarked Z-Score Statistics}
\label{tab:zscore_wm}
\begin{center}
\footnotesize
\resizebox{\columnwidth}{!}{\begin{tabular}{@{}ccccccc@{}}
\toprule
$\gamma$ & $\delta$ & KGW (std) & Ours (std) & $\Delta$ & t-stat & p \\
\midrule
\multicolumn{7}{c}{\textit{Without self-salt}} \\
\midrule
0.25 & 1.0 & 3.21 (1.44) & 3.18 (1.33) & $-$0.03 & 0.40 & 0.693 \\
0.25 & 2.0 & 6.93 (2.07) & 6.69 (2.10) & $-$0.24 & 1.81 & 0.070 \\
0.25 & 5.0 & 15.54 (3.43) & 15.14 (3.53) & $-$0.40 & 1.83 & 0.068 \\
0.25 & 10.0 & 17.63 (4.66) & 17.64 (4.50) & +0.01 & -0.03 & 0.978 \\
0.50 & 1.0 & 3.19 (1.26) & 3.22 (1.22) & +0.03 & -0.41 & 0.679 \\
0.50 & 2.0 & 5.80 (1.63) & 5.92 (1.58) & +0.13 & -1.24 & 0.215 \\
0.50 & 5.0 & 10.26 (1.93) & 10.61 (1.57) & +0.35 & -3.16 & 0.002 \\
0.50 & 10.0 & 11.77 (1.98) & 11.92 (1.88) & +0.15 & -1.22 & 0.224 \\
\midrule
\multicolumn{7}{c}{\textit{With self-salt}} \\
\midrule
0.25 & 1.0 & 2.68 (1.62) & 2.96 (1.45) & +0.28 & -2.87 & 0.004 \\
0.25 & 2.0 & 6.67 (2.22) & 6.86 (2.16) & +0.19 & -1.39 & 0.165 \\
0.25 & 5.0 & 16.37 (3.61) & 16.66 (3.66) & +0.29 & -1.25 & 0.211 \\
0.25 & 10.0 & 18.31 (5.13) & 19.75 (4.85) & +1.44 & -4.55 & $< 10^{-5}$ \\
0.50 & 1.0 & 3.07 (1.36) & 3.11 (1.36) & +0.04 & -0.52 & 0.602 \\
0.50 & 2.0 & 6.10 (1.72) & 6.17 (1.63) & +0.07 & -0.67 & 0.504 \\
0.50 & 5.0 & 10.74 (2.05) & 11.17 (1.66) & +0.43 & -3.65 & $< 10^{-3}$ \\
0.50 & 10.0 & 12.64 (2.03) & 12.73 (2.01) & +0.09 & -0.67 & 0.505 \\
\bottomrule
\end{tabular}
}
\end{center}
\vspace{-2mm}
{\small $N=500$ per group.}
\end{table}

The top left panel of Fig.~\ref{fig:simple1} confirms this visually: the watermarked z-score distributions for both schemes overlap almost completely, with clear separation from zero.

\begin{figure}[!t]
\centerline{\includegraphics[width=\columnwidth]{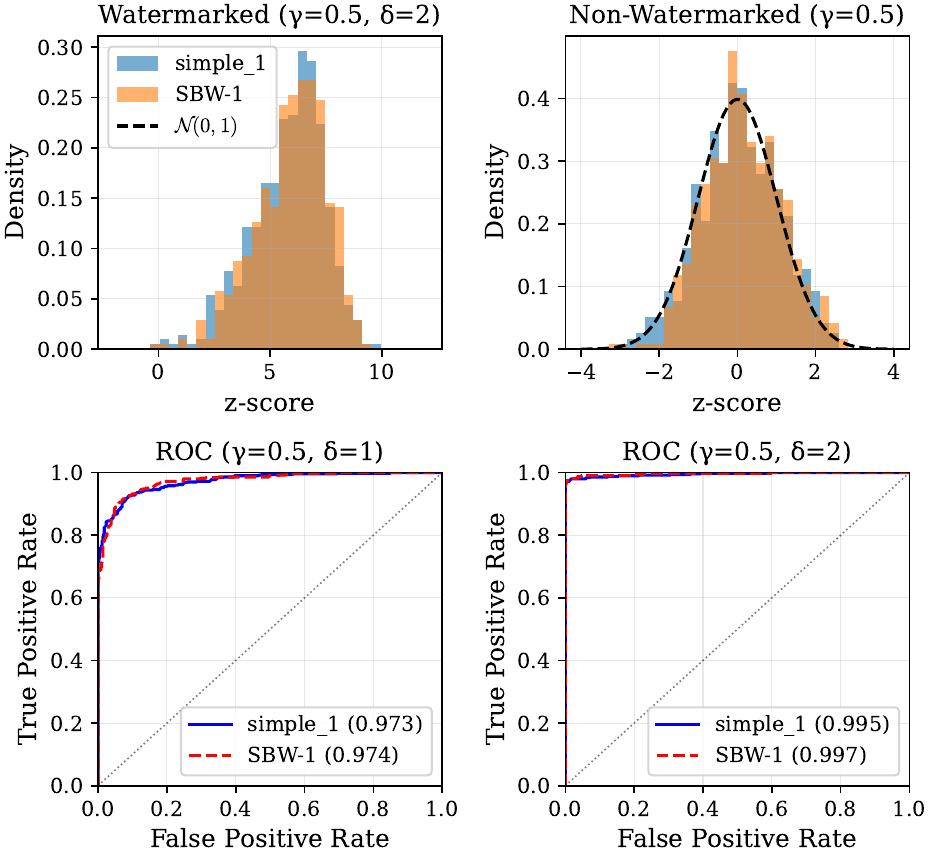}}
\caption{Without self-salt: z-score distributions are indistinguishable between implementations. Bottom: ROC curves showing equivalent detection (AUC differences $<$0.01).}
\label{fig:simple1}
\end{figure}

To test full distributional agreement beyond the mean, we apply two-sample KS tests to the watermarked z-scores for each configuration (Appendix~\ref{app:ks}, Table~\ref{tab:ks_twosample_wm}). Seven of eight configurations fail to reject the null hypothesis of identical distributions (p $>$ 0.05). The single rejection occurs at the same configuration flagged by the t-test, consistent with chance given 8 independent tests at $\alpha=0.05$.

\subsubsection{With Self-Salt (\texttt{selfhash} vs SBW-ss)}

\textbf{Non-watermarked text.} Unlike the non-self-salt case, the two-sample KS test strongly rejects the hypothesis that \texttt{selfhash} and SBW-ss produce the same null distribution ($D=0.112$, $p < 10^{-22}$ at $\gamma=0.25$; Appendix~\ref{app:ks}, Table~\ref{tab:ks_twosample_nowm}). However, SBW-ss is consistently \emph{closer} to the theoretical $\mathcal{N}(0,1)$: 1.74$\times$ at $\gamma=0.25$ and 1.60$\times$ at $\gamma=0.50$ (measured by KS statistic ratio; Table~\ref{tab:ks_onesample}, bottom panel; Fig.~\ref{fig:selfhash}, top right), with the negative mean bias 1.9$\times$ smaller at $\gamma=0.25$ and 2.1$\times$ at $\gamma=0.50$. Section~\ref{sec:hash} attributes this difference entirely to the hash function, not the Bernoulli construction.

\textbf{Watermarked text.} Table~\ref{tab:zscore_wm} bottom panel reports summary statistics and two-sample t-tests for watermarked text. The SBW-ss scheme consistently produces slightly higher z-scores (all $\Delta$ positive), with 3 of 8 configurations reaching statistical significance.

Fig.~\ref{fig:selfhash} top left shows the results for the selfhash scheme, confirming the slight positive shift for SBW-ss.

\begin{figure}[!t]
\centerline{\includegraphics[width=\columnwidth]{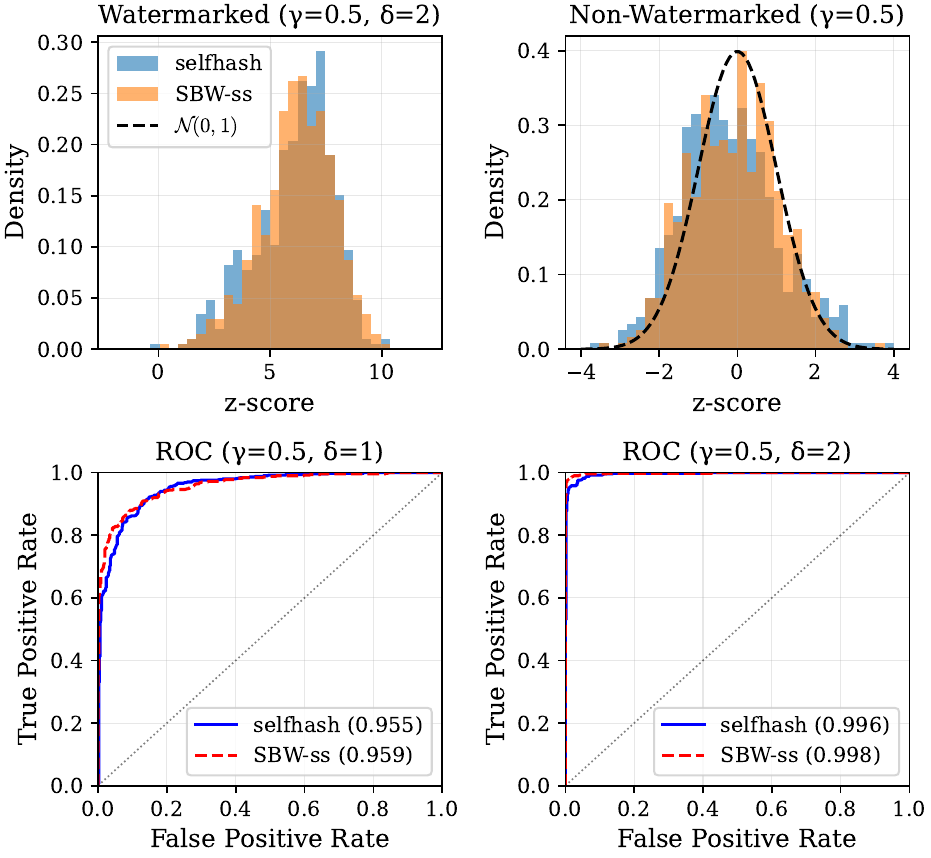}}
\caption{With self-salt: SBW-ss produces slightly higher z-scores due to the Jenkins hash (Section~\ref{sec:hash}). Bottom: ROC curves showing equivalent detection.}
\label{fig:selfhash}
\end{figure}

Two-sample KS tests confirm the distributional differences (Appendix~\ref{app:ks}, Table~\ref{tab:ks_twosample_wm}): four of eight configurations reject distributional equivalence, contrasting with the non-self-salt result (1/8 rejections). The effect is stronger at high $\delta$ values, where the watermark bias amplifies any difference in green list composition.

\subsection{Detection Accuracy: ROC-AUC Analysis}
\label{sec:roc}

We measure detection performance using ROC-AUC, which captures the trade-off between true positive rate and false positive rate across all z-score thresholds.

\subsubsection{Without Self-Salt (\texttt{simple\_1} vs SBW-1)}

Given the z-score equivalence established above, we expect equivalent detection performance. Table~\ref{tab:roc} confirms this prediction.

\begin{table}[!t]
\caption{ROC-AUC Comparison}
\label{tab:roc}
\begin{center}
\footnotesize
\resizebox{\columnwidth}{!}{\begin{tabular}{@{}ccccc|ccc@{}}
\toprule
\multicolumn{5}{c|}{\textit{Without self-salt}} & \multicolumn{3}{c}{\textit{With self-salt (top-40)}} \\
$\gamma$ & $\delta$ & KGW & Ours & $\Delta$ & KGW & Ours & $\Delta$ \\
\midrule
0.25 & 1.0 & .96 & .97 & +.009 & .943 & .953 & +.009 \\
0.25 & 2.0 & .995 & .997 & +.001 & .995 & .994 & $-$.001 \\
0.25 & 5.0 & 1.00 & 1.00 & .000 & 1.00 & 1.00 & .000 \\
0.25 & 10 & 1.00 & 1.00 & .000 & 1.00 & 1.00 & .000 \\
0.50 & 1.0 & .972 & .974 & +.001 & .955 & .959 & +.004 \\
0.50 & 2.0 & .995 & .996 & +.001 & .995 & .998 & +.002 \\
0.50 & 5.0 & 1.00 & 1.00 & .000 & 1.00 & 1.00 & .000 \\
0.50 & 10 & 1.00 & 1.00 & .000 & .999 & 1.00 & +.000 \\
\bottomrule
\end{tabular}
}
\end{center}
\vspace{-2mm}
{\small 500 WM + 500 non-WM per config. Max $\Delta$AUC = 0.01.}
\end{table}

The maximum AUC difference is below 1\% (Table~\ref{tab:roc}) and occurs at the weakest configuration where sampling variance is highest. Across all configurations, the differences show no systematic bias favoring either implementation. At higher $\delta$ values, both schemes achieve near-perfect separation (AUC $\geq$ 0.999), leaving no room for measurable differences. Fig.~\ref{fig:simple1} bottom row confirms this visually: the ROC curves are indistinguishable. These results demonstrate that Stateless Bernoulli Watermarking achieves equivalent detection accuracy to KGW.

\subsubsection{With Self-Salt (\texttt{selfhash} vs SBW-ss)}

Since SBW-ss produces slightly higher z-scores on watermarked text (Section~\ref{sec:zscore}), detection performance can only remain equivalent or improve. Fig.~\ref{fig:selfhash} bottom panel confirms this: AUC differences are below 1\% (Table~\ref{tab:roc}), with SBW-ss showing a marginal advantage at low false positive rates due to its better-calibrated null distribution.

\subsection{Text Quality: Perplexity Analysis}
\label{sec:perplexity}

We measure text quality using perplexity (PPL) from the external evaluation model.

\subsubsection{Without Self-Salt (\texttt{simple\_1} vs SBW-1)}

Table~\ref{tab:ppl} and Fig.~\ref{fig:ppl} left panels report the perplexity comparison for the non-self-salt scheme. The perplexity degradation follows expected patterns: larger $\delta$ values cause greater quality loss, and smaller $\gamma$ (more constrained green lists) amplifies this effect. Critically, both implementations produce nearly identical perplexity profiles. Only one of eight configurations shows p $<$ 0.05 (Table~\ref{tab:ppl}), consistent with chance given 8 independent tests at $\alpha=0.05$.

\begin{table}[!t]
\caption{Perplexity Degradation ($\Delta$PPL = WM $-$ NoWM)}
\label{tab:ppl}
\begin{center}
\footnotesize
\resizebox{\columnwidth}{!}{\begin{tabular}{@{}cc|ccc|ccc@{}}
\toprule
& & \multicolumn{3}{c|}{\textit{Without self-salt}} & \multicolumn{3}{c}{\textit{With self-salt}} \\
$\gamma$ & $\delta$ & $\Delta$KGW & $\Delta$Ours & p & $\Delta$KGW & $\Delta$Ours & p \\
\midrule
0.25 & 1.0 & +0.25 & +0.17 & 0.54 & +0.19 & +0.24 & 0.26 \\
0.25 & 2.0 & +1.34 & +1.05 & 0.06 & +1.29 & +1.27 & 0.70 \\
0.25 & 5.0 & +6.91 & +6.00 & 0.01 & +7.16 & +8.14 & .004 \\
0.25 & 10 & +11.2 & +10.4 & 0.23 & +10.6 & +14.9 & $<10^{-7}$ \\
0.50 & 1.0 & +0.31 & +0.25 & 0.64 & +0.26 & +0.15 & 0.97 \\
0.50 & 2.0 & +0.98 & +1.07 & 0.71 & +1.11 & +1.07 & 0.80 \\
0.50 & 5.0 & +4.01 & +4.19 & 0.25 & +4.15 & +4.78 & .001 \\
0.50 & 10 & +6.63 & +6.74 & 0.71 & +7.66 & +8.40 & 0.03 \\
\bottomrule
\end{tabular}
}
\end{center}
\vspace{-2mm}
{\small $N=500$. Baseline PPL without watermark $\approx$ 5.4.}
\end{table}

\begin{figure}[!t]
\centerline{\includegraphics[width=\columnwidth]{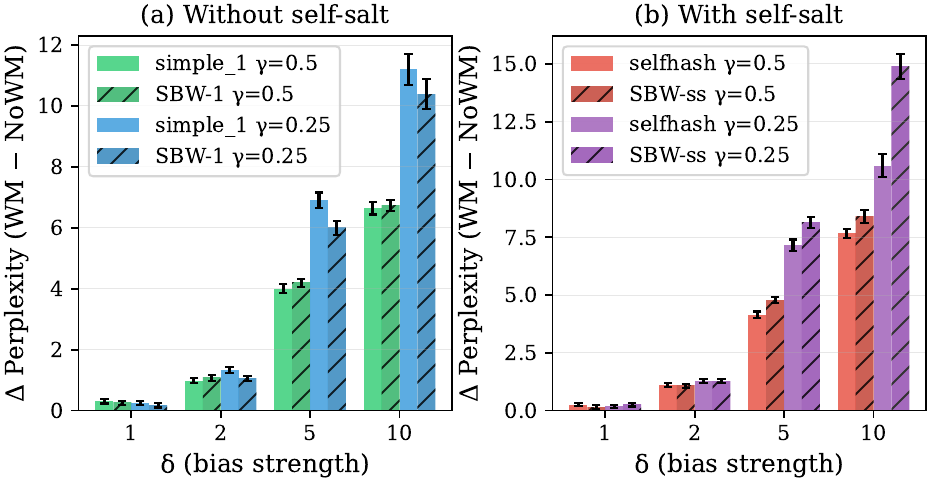}}
\caption{Perplexity degradation ($\Delta$PPL = watermarked $-$ non-watermarked) vs $\delta$. (a) Without self-salt: implementations are indistinguishable. (b) With self-salt: SBW-ss shows higher degradation at $\delta \geq 5$, attributable to the Jenkins hash (Section~\ref{sec:hash}).}
\label{fig:ppl}
\end{figure}

\subsubsection{With Self-Salt (\texttt{selfhash} vs SBW-ss)}

Unlike the non-self-salt comparison, Table~\ref{tab:ppl} right panel reveals a systematic difference: SBW-ss shows higher perplexity degradation at $\delta \geq 5$, with the gap growing substantially at aggressive configurations (Table~\ref{tab:ppl}). At practical settings ($\delta \leq 2$), the implementations are statistically indistinguishable. This pattern suggests the hash function interacts with the watermark bias; Section~\ref{sec:hash} attributes this to the Jenkins hash and shows the higher perplexity is accompanied by higher text diversity.

\subsection{Hash Function Design: A New Axis for Watermark Quality}
\label{sec:hash}

The experiments in Sections~\ref{sec:zscore}--\ref{sec:perplexity} compared \texttt{selfhash} and SBW-ss, which differ in two ways simultaneously: (1) green list construction (permutation vs Bernoulli) and (2) hash function (permutation table vs Jenkins). Because the hash difference produced unexpected results, we introduce a control scheme, SBW-ss-cpu, which uses the GPU Bernoulli code path but retains the KGW hash function. By comparing all three schemes on the same data, we isolate the hash effect and show that hash function design significantly impacts watermark quality, independent of the green list construction method.

\subsubsection{Attribution of Distributional Differences}
\label{sec:hash_attribution}

To confirm the distributional differences observed in Section~\ref{sec:zscore} and identify their cause, we score a separate set of 500 non-watermarked completions (from a $\delta=0$ control experiment) with three detectors: \texttt{selfhash}, SBW-ss-cpu (our control), and SBW-ss. The results (detailed in Appendix~\ref{app:hash}) are unambiguous: when the hash function is held constant, the Bernoulli method produces distributions indistinguishable from permutation ($p=0.173$ and $p=0.329$ for $\gamma=0.25, 0.5$ respectively), while the different-hash comparison is strongly rejected ($p=4{\times}10^{-4}$ at $\gamma=0.25$). The distributional differences are entirely attributable to the Jenkins hash function, not the Bernoulli green list construction.

On this control experiment, the Jenkins hash improves null calibration by 1.8$\times$ at $\gamma=0.25$ and reduces the negative mean bias by 2.3$\times$, in line with what already reported. At any fixed z-score threshold, the SBW-ss detector therefore produces fewer false positives. The effect is most pronounced at $\gamma=0.25$, where the smaller green list amplifies sensitivity to hash uniformity.

\subsubsection{Perplexity and Diversity}
\label{sec:hash_diversity}

Using the same control scheme, we confirm that the perplexity difference between \texttt{selfhash} and SBW-ss is also attributable to the hash function (Appendix~\ref{app:hash}). At practical configurations ($\delta \leq 2$), the two schemes produce statistically indistinguishable diversity and perplexity. At aggressive configurations ($\delta \geq 5$), a tradeoff emerges: \texttt{selfhash} produces lower perplexity but also lower diversity (e.g., diversity drops to 0.40 at $\delta=10$ vs 0.62 with Jenkins, relative to a 0.70 non-watermarked baseline), while SBW-ss produces higher perplexity but higher diversity. The lower perplexity in \texttt{selfhash} does not indicate better text quality; it reflects the model being constrained to repeat similar token patterns due to autocorrelation in the permutation table hash's green list assignments.

\section{Capabilities Enabled by Stateless Formulation}
\label{sec:capabilities}

Beyond the asymptotic improvement, the shift to a stateless, threshold-based RNG unlocks several hardware-level optimizations that we have implemented and validated in a production-ready codebase integrated with vLLM~\cite{kwon2023efficient}.

\subsection{Fused Kernel Architecture}
\label{sec:fusion}

Kernel fusion, combining multiple operations into a single GPU kernel launch, is critical for latency-sensitive inference as it eliminates intermediate memory traffic and kernel launch overhead. KGW's permutation cannot be fused: the GPU must complete the shuffle, materialize a $(B, V)$ index tensor, derive a boolean mask, and apply the bias, each as a separate kernel. SynthID's $m$ reweighting passes can be fused by \texttt{torch.compile} into a single kernel (as we demonstrate in our benchmark), but the fused kernel still performs $O(k \cdot m)$ work per token step.

Our Bernoulli formulation eliminates this barrier through three properties: (1) in-place logit mutation with no intermediate tensor materialization (zero additional memory required; Section~\ref{sec:scaling}), (2) a single CBRNG evaluation per token (no sequential dependencies), and (3) fully GPU-resident computation with no CPU synchronization. The entire watermarking operation can be expressed as a single pointwise tensor expression and compiled into one fused CUDA kernel using \texttt{torch.compile}~\cite{ansel2024pytorch2,tillet2019triton} with \texttt{mode="max-autotune"}, absorbing watermarking into existing logit post-processing without a dedicated pipeline step.

\subsection{Full-Vocabulary Self-Salt Watermarking}
\label{sec:selfsalt}

Self-salt seeding schemes, where the candidate token itself influences the seed, exhibit superior robustness to editing attacks~\cite{kirchenbauer2023watermark}. However, existing methods either avoid self-salt entirely (SynthID uses a fixed context window of $H=4$ tokens, independent of the candidate) or restrict it to a small subset: KGW evaluates only the top-$k=40$ candidates, leaving the remaining vocabulary unbiased and weakening the watermark signal in high-entropy contexts.

Our approach reduces the per-candidate cost to $O(1)$, making full-vocabulary self-salt practical. Setting $k = V$ biases the \textbf{entire vocabulary} at $O(B \times V)$ total cost, cheaper than KGW's $O(B \times 40 \times V \log V)$ for just 40 candidates. For the first time, true self-salt watermarking without approximation is feasible at production scale, combining the robustness benefits of candidate-dependent seeding with full vocabulary coverage.

\subsection{Inherited Security and Robustness}
\label{sec:security}

A natural question is whether SBW inherits the security and robustness guarantees established for KGW~\cite{kirchenbauer2023watermark,kirchenbauer2024reliability,fernandez2023three,zhao2023provable}. We argue that it does, inheriting both the strengths and the fundamental limitations~\cite{zhang2024impossibility} of the underlying statistical framework. The existing theoretical analyses rely on a single statistical property: conditioned on the seeding context, each token is assigned to the green list with probability $\gamma$, independently of other tokens. Our Proposition~1 proves that SBW satisfies exactly this property. Any theorem whose proof depends only on this independence and the marginal $\gamma$ transfers directly, including detection power and false positive rates (confirmed empirically by our two-sample KS tests, Table~\ref{tab:ks_twosample_nowm}), robustness to partial edits (same signal degradation for identical seeding schemes), resistance to removal attacks (same $\delta$ bias profile), and security against spoofing (inverting the CBRNG is computationally equivalent to inverting KGW's PRF-seeded permutation). Appendix~\ref{app:robustness} validates this empirically: across 19 attack configurations (character/word edits, truncation, paraphrasing, MLM substitution), SBW matches or exceeds (due to improved hash function; Section~\ref{sec:hash}) KGW's robustness in all cases.

\section{Performance Evaluation}
\label{sec:scaling}

To quantify practical impact, we measure end-to-end watermark overhead on Qwen2-7B, generating 128 tokens per request across batch sizes 16--256, with 50 iterations and 10 warmup runs. Both methods run as logits processors within the same \texttt{transformers.generate()} loop.\footnote{We use Transformers rather than vLLM because no vLLM-compatible SynthID processor exists; porting our vLLM processor to Transformers was straightforward, ensuring a fair comparison. Qwen2-7B was chosen as it is supported by the older Transformers version (4.43.3) required by the official SynthID implementation.} Overhead is computed as the difference in median generation time between watermarked and non-watermarked runs; error bars show $\pm 1\sigma$ computed as $\sqrt{\sigma_{\text{wm}}^2 + \sigma_{\text{base}}^2}$.

Fig.~\ref{fig:e2e_overhead} reports the results. SBW adds less than 1\% overhead across all batch sizes (9--76~ms over 3.4--14.5~s generation time), with the absolute overhead remaining nearly constant as batch size grows. SynthID (official \texttt{SynthIDSparseTopKMixin}, top-$k$=40) adds 80--148~ms (0.6--4.4\% at small batches). At batch 64, SBW overhead is 21~ms vs.\ SynthID's 136~ms, a \textbf{6.5$\times$ reduction}. At batch 256, the two converge: SBW adds 76~ms (0.53\%) while SynthID adds 80~ms (0.55\%).

The relative overhead (right panel) reveals distinct scaling regimes. SBW's percentage overhead remains approximately constant (${\sim}$0.5\%) across all batch sizes, while SynthID's decreases from 6\% to 0.6\%. This is explained by GPU compute saturation: SBW computes seeds for every token in the 151K vocabulary (full-vocab self-salt), performing ${\sim}$3800$\times$ more work per position than SynthID's top-40 approximation. This full-vocabulary computation saturates GPU compute around $B$=32--64, as confirmed by isolated kernel benchmarks (Appendix~\ref{app:performance}). Beyond this saturation point, the watermark kernel time grows proportionally with batch size, matching the growth rate of generation time, yielding a constant percentage overhead. SynthID's top-40 approximation requires far less compute and does not saturate, so its fixed absolute cost (${\sim}$100--200~ms) becomes a shrinking fraction of the growing generation time. The two curves cross at $B{\approx}$256; on more powerful hardware, this crossing point would shift to higher batch sizes (as the saturation threshold increases), but the crossing is inherent to the asymmetry in computational work.

These measurements are conservative for SBW: in production vLLM deployments, the watermark kernel runs on the same CUDA stream as the model forward pass, allowing partial overlap with memory-bound operations. Neither KGW (CPU-side \texttt{randperm}) nor SynthID (stateful mixin design) can benefit from this pipelining. A dedicated vLLM serving benchmark confirms ${\sim}$1\% overhead at production concurrency (Appendix~\ref{app:vllm}), with room for further optimization.

In isolated benchmarks measuring only the watermarking logits processor (excluding the LLM forward pass), SBW achieves \textbf{2--3$\times$ lower latency than SynthID}~\cite{dathathri2024scalable} and \textbf{over 6000$\times$ lower than KGW}~\cite{kirchenbauer2023watermark}, despite processing the entire 151K vocabulary with candidate-dependent seeding while SynthID processes only top-40 and KGW is limited to top-40 self-salt. Our schemes also use zero additional memory (in-place logit mutation), while KGW allocates $9 B \cdot V$ bytes per step and SynthID allocates tensors proportional to $B \times k \times \text{depth}$ (reaching 3.8~MB at batch 512). We note that SynthID provably preserves the output distribution (non-distortionary), while our method adds $\delta$ to green token logits with negligible perplexity impact at practical values ($\delta \leq 2$, Section~\ref{sec:perplexity}). Full isolated latency results, tail latency percentiles, speedup ratios, and memory overhead are reported in Appendix~\ref{app:performance}.

\begin{figure}[!t]
\centerline{\includegraphics[width=\columnwidth]{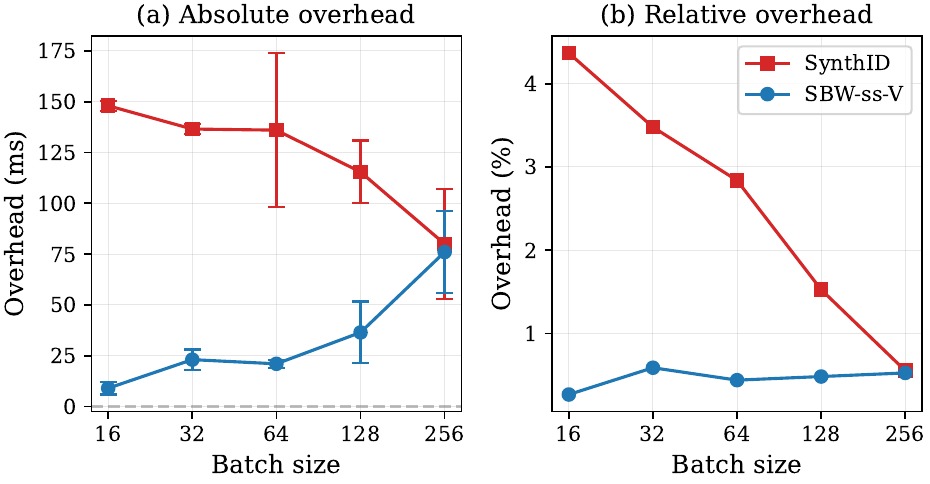}}
\caption{End-to-end watermark overhead (Qwen2-7B, RTX 3090, 128 output tokens, 50 iterations). Left: absolute overhead in ms with $\pm 1\sigma$ error bars. Right: relative overhead as percentage of total generation time. SBW remains below 1\% at all batch sizes while SynthID (official, top-$k$=40) ranges from 0.6\% to 4.4\%. KGW (selfhash, top-40) is omitted as out of scale: +1.8s at $B$=1 (57\%), +7.0s at $B$=4 (221\%), +27.7s at $B$=16 (869\%).}
\label{fig:e2e_overhead}
\end{figure}

\section{Conclusion}

We have presented Stateless Bernoulli Watermarking (SBW), a statistical watermark that achieves the same detection guarantees as KGW through a fundamentally different mechanism: independent per-token Bernoulli trials rather than inherently sequential vocabulary operations. This enables single-kernel $O(1)$ watermarking with zero allocation, full-vocabulary self-salt, and architectural compatibility with distributed inference, while achieving 2$\times$ lower latency than SynthID and over 6000$\times$ lower than KGW. A second contribution is the identification of hash function design as a meaningful factor in watermark quality. These results demonstrate that the ``watermark tax'' is not inherent to statistical watermarking but a consequence of inherently sequential formulations. The computational headroom freed by SBW opens future directions including vocabulary-agnostic watermarks that transfer across tokenizers, asymmetric schemes with public verification, and adaptive $\gamma$/$\delta$ strategies informed by token entropy~\cite{liu2024adaptive} or online optimization~\cite{cai2024statistical}.

\section{Ethical Considerations}

We advocate that watermarking should be applied transparently: users should be informed when a watermark is present. The method itself is agnostic to disclosure policy, and we encourage adopters to pair deployment with clear user-facing documentation.

Regarding false accusations, the z-score framework provides calibrated Type~I error control: at sufficiently conservative thresholds and with texts of adequate length, the false positive rate becomes negligible. We caution against applying detection to very short texts where statistical power is limited.

\section{Limitations}

The main experiments use a single GPU (RTX 3090) and model (Qwen3-8B); while we validate on an additional model and setup (Appendix~\ref{app:generalization}), broader evaluation across model families remains future work. While we validate robustness across 19 attack configurations (Appendix~\ref{app:robustness}), we did not evaluate adaptive watermark-removal strategies or spoofing attacks; dedicated security analysis remains future work. Our GPU implementation relies on compiler-generated Triton kernels via \texttt{torch.compile}; hand-tuned CUDA kernels with explicit memory coalescing and warp-level optimizations could further reduce latency, particularly at small batch sizes where kernel launch overhead dominates. Finally, while the stateless formulation is architecturally compatible with distributed and multi-device inference (each device can independently compute identical green lists from the shared key), we have not empirically validated this in a tensor-parallel or pipeline-parallel deployment.

\bibliography{custom}

\appendix

\section{Hash Function Analysis: Detailed Results}
\label{app:hash}

This appendix provides the full statistical analysis supporting the hash function findings in Section~\ref{sec:hash}.

\subsection{Distributional Attribution}

We scored 500 non-watermarked completions (from a separate $\delta=0$ control experiment) with three detectors: \texttt{selfhash}, SBW-ss-cpu (Bernoulli green list with permutation table hash), and SBW-ss (Bernoulli green list with Jenkins hash). For each pair of detectors (with $\gamma=0.25, 0.5$), we apply the two-sample Kolmogorov-Smirnov (KS) test. Table~\ref{tab:hash_attribution} reports the results.

\begin{table}[!t]
\caption{Two-Sample KS Tests Isolating Hash Effect}
\label{tab:hash_attribution}
\begin{center}
\footnotesize
\resizebox{\columnwidth}{!}{\begin{tabular}{@{}l|cc|cc@{}}
\toprule
& \multicolumn{2}{c|}{$\gamma=0.25$} & \multicolumn{2}{c}{$\gamma=0.50$} \\
Comparison & KS & p & KS & p \\
\midrule
simple\_1 vs SBW-1 (same hash) & 0.028 & 0.990 & 0.040 & 0.819 \\
selfhash vs SBW-ss-cpu (same) & 0.070 & 0.173 & 0.060 & 0.329 \\
selfhash vs SBW-ss (diff hash) & 0.130 & $4{\times}10^{-4}$ & 0.054 & 0.460 \\
\bottomrule
\end{tabular}
}
\end{center}
\vspace{-2mm}
{\small $N=500$. Only different-hash comparison rejected at $\gamma=0.25$.}
\end{table}

When the hash function is held constant, the Bernoulli method produces distributions indistinguishable from permutation. At $\gamma=0.50$, even the different-hash comparison fails to reject equivalence, indicating the effect is most pronounced when the smaller green list amplifies sensitivity to hash uniformity.

\subsection{Null Calibration}

We assess how well each detector's null distribution conforms to the theoretical $\mathcal{N}(0,1)$ using one-sample KS tests. A lower KS statistic indicates better calibration, meaning the detector's false positive rate at any z-score threshold is closer to the nominal rate. Table~\ref{tab:null_calibration} reports the results.

\begin{table}[!t]
\caption{One-Sample KS Tests vs $\mathcal{N}(0,1)$ (Null Calibration)}
\label{tab:null_calibration}
\begin{center}
\footnotesize
\resizebox{\columnwidth}{!}{\begin{tabular}{@{}l|cc|cc@{}}
\toprule
& \multicolumn{2}{c|}{$\gamma=0.25$} & \multicolumn{2}{c}{$\gamma=0.50$} \\
Detector & Mean & KS & Mean & KS \\
\midrule
selfhash (CPU hash) & $-$0.634 & 0.259 & $-$0.150 & 0.101 \\
SBW-ss-cpu & $-$0.694 & 0.296 & $-$0.104 & 0.067 \\
SBW-ss (Jenkins) & $-$0.279 & 0.143 & $-$0.102 & 0.091 \\
\bottomrule
\end{tabular}
}
\end{center}
\vspace{-2mm}
{\small $N=500$. Lower KS = closer to theoretical null.}
\end{table}

The Bernoulli method alone (with permutation table hash) does not improve calibration, as predicted by our equivalence proof (Section~\ref{sec:theory}), confirming that the improvement is attributable to the Jenkins hash's larger output range producing more uniform green list assignments.

\subsection{Perplexity Isolation}

Using the same SBW-ss-cpu control scheme, we compare perplexity across the three implementations. Table~\ref{tab:ppl_hash_isolation} shows that when using the same hash function, \texttt{selfhash} and SBW-ss-cpu produce statistically indistinguishable perplexity, confirming the hash function as the sole cause of the perplexity difference.

\begin{table}[!t]
\caption{Perplexity Comparison Isolating Hash Effect}
\label{tab:ppl_hash_isolation}
\begin{center}
\footnotesize
\resizebox{\columnwidth}{!}{\begin{tabular}{@{}ll|cc|cc@{}}
\toprule
& & \multicolumn{2}{c|}{$\delta=5$} & \multicolumn{2}{c}{$\delta=10$} \\
Scheme & Hash & $\Delta$ppl & p & $\Delta$ppl & p \\
\midrule
selfhash & CPU perm & 7.16 & --- & 10.60 & --- \\
SBW-ss-cpu & CPU perm & 7.81 & 0.095 & 10.67 & 0.927 \\
SBW-ss & Jenkins & 8.14 & .004 & 14.89 & .000 \\
\bottomrule
\end{tabular}
}
\end{center}
\vspace{-2mm}
{\small $\gamma=0.25$, $N=500$. p-values vs selfhash.}
\end{table}

\subsection{Diversity Analysis}

The permutation table hash, with its smaller range and table-based structure, produces more autocorrelation in green list assignments across similar contexts. This autocorrelation manifests as text repetition: when similar contexts map to similar seeds, the model is biased toward the same green tokens repeatedly. We quantify this using a diversity metric based on unique n-gram fractions~\cite{welleck2020neural,li2023contrastive,kirchenbauer2024reliability}:
\begin{equation}
\text{diversity} = -\log\left(1 - \prod_{n=1}^{N} u_n\right)
\end{equation}
where $u_n$ is the fraction of unique n-grams at order $n$ ($N=4$). Higher values indicate more diverse (less repetitive) text. Table~\ref{tab:diversity} reports the results.

\begin{table}[!t]
\caption{Text Diversity Comparison (Watermarked)}
\label{tab:diversity}
\begin{center}
\footnotesize
\begin{tabular}{@{}cccc@{}}
\toprule
$\delta$ & selfhash & SBW-ss & p \\
\midrule
1 & 0.696 & 0.723 & 0.13 \\
2 & 0.728 & 0.761 & 0.10 \\
5 & 0.695 & 0.805 & $<$.0001 \\
10 & 0.403 & 0.624 & $<$.0001 \\
\bottomrule
\end{tabular}

\end{center}
\vspace{-2mm}
{\small $\gamma=0.25$, $N=500$. NoWM diversity $\approx$ 0.70. Higher = less repetition.}
\end{table}

\section{Two-Sample KS Tests}
\label{app:ks}

\subsection{Non-Watermarked Z-Scores}

We pool non-watermarked z-scores across all $\delta$ values and generation $\gamma$ settings ($N=4000$ per scheme), then run two-sample KS tests between KGW and SBW for each detector $\gamma$. Table~\ref{tab:ks_twosample_nowm} reports the results.

\begin{table}[!t]
\caption{Two-Sample KS Tests on Non-Watermarked Z-Scores}
\label{tab:ks_twosample_nowm}
\begin{center}
\footnotesize
\begin{tabular}{@{}ccccl@{}}
\toprule
Scheme & $\gamma_{\text{det}}$ & $N$ & KS & p-value \\
\midrule
\texttt{simple\_1} & 0.25 & 4000 & 0.026 & 0.141 \\
\texttt{simple\_1} & 0.50 & 4000 & 0.042 & $< 10^{-3}$ \\
\midrule
\texttt{selfhash} & 0.25 & 4000 & 0.112 & $< 10^{-22}$ \\
\texttt{selfhash} & 0.50 & 4000 & 0.081 & $< 10^{-12}$ \\
\bottomrule
\end{tabular}

\end{center}
\vspace{-2mm}
{\small $N=4000$ pooled (4 $\delta$ $\times$ 2 $\gamma$ $\times$ 500 samples). Same texts scored by both detectors.}
\end{table}

\subsection{Watermarked Z-Scores}

To test full distributional agreement beyond the mean, we apply two-sample KS tests to the watermarked z-scores for each $(\gamma, \delta)$ configuration. Table~\ref{tab:ks_twosample_wm} reports the results for both schemes.

\begin{table}[!t]
\caption{Two-Sample KS Tests on Watermarked Z-Scores}
\label{tab:ks_twosample_wm}
\begin{center}
\footnotesize
\begin{tabular}{@{}cc|cc|cc@{}}
\toprule
& & \multicolumn{2}{c|}{\textit{Without self-salt}} & \multicolumn{2}{c}{\textit{With self-salt}} \\
$\gamma$ & $\delta$ & KS & p & KS & p \\
\midrule
0.25 & 1.0 & 0.044 & 0.719 & 0.098 & 0.016 \\
0.25 & 2.0 & 0.062 & 0.292 & 0.048 & 0.613 \\
0.25 & 5.0 & 0.080 & 0.082 & 0.096 & 0.020 \\
0.25 & 10 & 0.058 & 0.370 & 0.154 & 0.000 \\
0.50 & 1.0 & 0.034 & 0.935 & 0.050 & 0.560 \\
0.50 & 2.0 & 0.044 & 0.719 & 0.046 & 0.666 \\
0.50 & 5.0 & 0.092 & 0.029 & 0.112 & 0.004 \\
0.50 & 10 & 0.070 & 0.173 & 0.066 & 0.226 \\
\bottomrule
\end{tabular}

\end{center}
\vspace{-2mm}
{\small $N=500$ per group. Without self-salt: 7/8 pass. With self-salt: 4/8 pass.}
\end{table}

For the non-self-salt scheme, seven of eight configurations fail to reject the null hypothesis of identical distributions ($p > 0.05$). The single rejection occurs at the same configuration flagged by the t-test in the main text. For the self-salt scheme, four of eight reject equivalence, with the effect stronger at high $\delta$ values where the watermark bias amplifies differences in green list composition due to the hash function (Section~\ref{sec:hash}).

\section{Reproducibility}
\label{app:reproducibility}

\textbf{Hardware.} All experiments were conducted on a single NVIDIA GeForce RTX 3090 GPU (24 GB VRAM, 10,496 CUDA cores, 936 GB/s memory bandwidth).

\textbf{Software.} Python 3.10.12, PyTorch 2.10.0 (CUDA 12.8, cuDNN 9.10.02), NVIDIA driver 535.183.01, vLLM 0.19.1.

\textbf{Generation parameters.} For each prompt, we generated one watermarked and one non-watermarked completion of up to 200 tokens. The same random seed (42) and hash key (15485863) were used across all experiments.

\textbf{Code.} This paper's source and supplementary materials are available at \url{https://github.com/si-mon-jinn/flip-dont-shuffle}, including scripts to reproduce all tables and figures. The watermarking method is implemented in the \texttt{sbw} library (\url{https://github.com/si-mon-jinn/sbw}, also available via \texttt{pip install sbw}). All experiments were run using \texttt{waterpipe} (\url{https://github.com/si-mon-jinn/waterpipe}), a watermark evaluation library.

\section{Isolated Benchmark: Fairness and Methodology}
\label{app:fairness}

This section documents the methodology and fairness considerations of the \emph{isolated logits processor benchmark} (end of Section~\ref{sec:scaling}), which measures per-call watermark kernel latency in isolation from the LLM forward pass. We detail the optimizations applied to each baseline to ensure the comparison reflects each method's best achievable performance.

\subsection{Improvements to the KGW Baseline}

The KGW baseline uses our re-implementation (\texttt{sbw-watermark}, scheme \texttt{ff-additive\_prf-4-False}) that retains the same algorithmic structure as KGW~\cite{kirchenbauer2023watermark,kirchenbauer2024reliability}: sequential \texttt{torch.randperm(V)} per batch element, per-sequence RNG seeding, and Python-level mask construction. We applied several backward-compatible fixes (e.g.\ batched seed computation, avoiding redundant tensor allocations), with the most significant being: the reference selfhash loop uses Python \texttt{enumerate()} over a CUDA tensor, causing one implicit CPU transfer per element. Replacing it with index-based access yields a 23$\times$ speedup while producing bit-identical green lists.

\subsection{Improvements to the SynthID Baseline}

We benchmark against SynthID-Text~\cite{dathathri2024scalable}, vendored from the official \texttt{google-deepmind/synthid-text} repository (Apache 2.0 license). To measure SynthID's \emph{theoretical minimum} latency, we inline all operations into a single \texttt{torch.compile} call with \texttt{mode="max-autotune"}, stripping state management, context history tracking, and safety checks. We verified bit-identical outputs between our compiled pipeline and the reference implementation; tests are available in the paper repository. This gives SynthID every advantage: our reported numbers represent a lower bound on its actual production cost.

\subsection{Our GPU Implementation}

Our fused Bernoulli implementation is benchmarked as-is from the production library (\texttt{sbw-watermark}), with no benchmark-specific optimizations. The compiled kernels are the same ones used in the vLLM logits processor. This means our reported numbers reflect \emph{actual production performance}.

\subsection{Measurement Methodology}

All measurements use identical methodology:
\begin{itemize}
\item CUDA event timing with \texttt{torch.cuda.synchronize()} between iterations
\item 20 warmup iterations (discarded) followed by 200 measurement iterations
\item \texttt{torch.cuda.empty\_cache()} between methods to prevent memory interference
\item Same \texttt{torch.compile} with \texttt{mode="max-autotune"} for all compiled paths
\item Sequential execution (no parallel streams or concurrent kernels)
\item Fresh random logits tensor of shape $(B, 151{,}936)$ for each method
\end{itemize}

\subsection{Known Asymmetries}

\begin{enumerate}
\item \textbf{Depth parameter.} SynthID uses depth$=30$ (Google's official demo configuration, upper end of the recommended 20--30 range). Using depth$=20$ would reduce SynthID's latency by $\sim$33\%.

\item \textbf{Hand-optimized SynthID vs.\ production SBW.} As noted above, our SynthID benchmark represents a theoretical optimum, while our SBW numbers are production code. 
\end{enumerate}


\subsection{Scheme Label Mapping}

Table~\ref{tab:scheme_mapping} maps the paper labels to the library scheme identifiers used in the \texttt{sbw-watermark} package, for reproducibility.

\begin{table*}[h]
\caption{Paper labels and corresponding library scheme identifiers.}
\label{tab:scheme_mapping}
\begin{center}
\footnotesize
\begin{tabular}{@{}llcccc@{}}
\toprule
Paper & Library scheme & PRF & Width & Self-salt & Candidates \\
\midrule
SBW-1 & \texttt{gpu-fused-simple\_1} & additive & 1 & No & full $V$ \\
SBW-4 & \texttt{gpu-fused-simple\_4} & additive & 4 & No & full $V$ \\
SBW-ss & \texttt{gpu-fused-selfhash} & anchored minhash & 4 & Yes & top-40 \\
SBW-ss-V & \texttt{gpu-fused-selfhash-fullvocab} & anchored minhash & 4 & Yes & full $V$ \\
SBW-ss-cpu & \texttt{gpu-fused-selfhash-cpuhash} & anchored minhash & 4 & Yes & top-40 \\
\bottomrule
\end{tabular}
\end{center}
\end{table*}

\section{Isolated Logits Processor Benchmark: Full Results}
\label{app:performance}

This appendix provides the full latency, tail latency, and memory data from the isolated logits processor benchmark (Section~\ref{sec:scaling}), which measures per-call watermark kernel cost on synthetic tensors without an LLM forward pass.

\subsection{Full Latency (All Batch Sizes)}

Table~\ref{tab:full_latency} reports median latency (p50) for all methods across all tested batch sizes. Fig.~\ref{fig:synthid_latency} visualizes the scaling behavior. Fig.~\ref{fig:synthid_speedup} shows speedup ratios on a log scale.

\begin{figure}[!t]
\centerline{\includegraphics[width=\columnwidth]{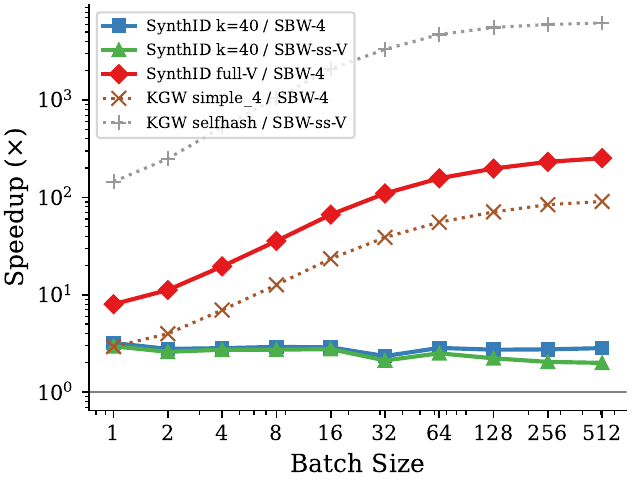}}
\caption{Speedup ratio (log scale). Values above 1.0 indicate slower than the SBW reference.}
\label{fig:synthid_speedup}
\end{figure}

\begin{figure}[!t]
\centerline{\includegraphics[width=\columnwidth]{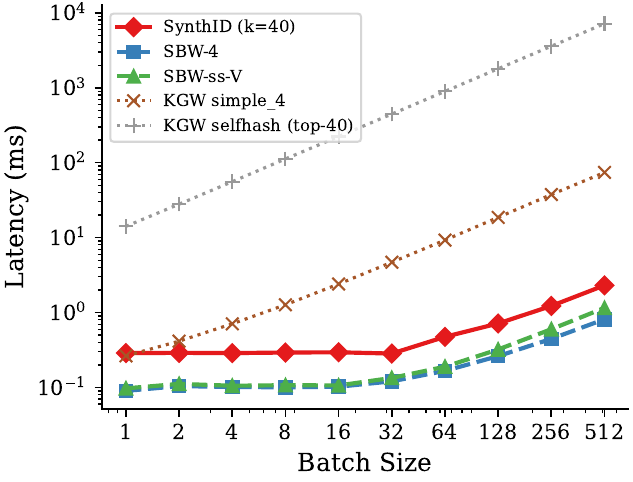}}
\caption{Latency scaling across all methods. SBW-4 and SBW-ss-V (dashed) process the entire 151K vocabulary yet remain faster than SynthID (top-40) at all batch sizes.}
\label{fig:synthid_latency}
\end{figure}

\begin{table*}[!t]
\caption{Median Latency (ms) Across All Batch Sizes}
\label{tab:full_latency}
\begin{center}
\footnotesize
\resizebox{\textwidth}{!}{\begin{tabular}{@{}ll|rrrrrrrrrr@{}}
\toprule
Method & & 1 & 2 & 4 & 8 & 16 & 32 & 64 & 128 & 256 & 512 \\
\midrule
\multirow{2}{*}{SBW-4} & ms & 0.090 & 0.104 & 0.102 & 0.100 & 0.102 & 0.121 & 0.167 & 0.262 & 0.445 & 0.814 \\
 & ratio & 0.9$\times$ & 0.9$\times$ & 1.0$\times$ & 0.9$\times$ & 1.0$\times$ & 0.9$\times$ & 0.9$\times$ & 0.8$\times$ & 0.7$\times$ & 0.7$\times$ \\
\midrule
\multirow{2}{*}{SBW-ss-V} & ms & 0.097 & 0.112 & 0.105 & 0.107 & 0.106 & 0.134 & 0.189 & 0.322 & 0.599 & 1.154 \\
 & ratio & 1.0$\times$ & 1.0$\times$ & 1.0$\times$ & 1.0$\times$ & 1.0$\times$ & 1.0$\times$ & 1.0$\times$ & 1.0$\times$ & 1.0$\times$ & 1.0$\times$ \\
\midrule
\multirow{2}{*}{SynthID (k=40)} & ms & 0.288 & 0.290 & 0.288 & 0.293 & 0.295 & 0.284 & 0.474 & 0.715 & 1.225 & 2.305 \\
 & ratio & 3.0$\times$ & 2.6$\times$ & 2.7$\times$ & 2.7$\times$ & 2.8$\times$ & 2.1$\times$ & 2.5$\times$ & 2.2$\times$ & 2.0$\times$ & 2.0$\times$ \\
\midrule
\multirow{2}{*}{KGW simple\_4} & ms & 0.264 & 0.415 & 0.708 & 1.269 & 2.401 & 4.690 & 9.259 & 18.6 & 37.5 & 73.9 \\
 & ratio & 2.7$\times$ & 3.7$\times$ & 6.7$\times$ & 11.8$\times$ & 22.5$\times$ & 35.0$\times$ & 48.9$\times$ & 57.9$\times$ & 62.7$\times$ & 64.1$\times$ \\
\midrule
\multirow{2}{*}{KGW selfhash (top-40)} & ms & 14.1 & 27.9 & 55.8 & 112.2 & 223.2 & 444.9 & 893.7 & 1790.6 & 3577.5 & 7150.7 \\
 & ratio & 145.2$\times$ & 250.2$\times$ & 528.7$\times$ & 1045.5$\times$ & 2095.3$\times$ & 3318.0$\times$ & 4718.4$\times$ & 5569.4$\times$ & 5972.5$\times$ & 6196.5$\times$ \\
\bottomrule
\end{tabular}
}
\end{center}
\vspace{-2mm}
{\small RTX 3090, V=151,936. Median over 200 iterations with 20 warmup.}
\end{table*}

\subsection{Tail Latency}

Table~\ref{tab:tail_latency} reports p50, p95, and p99 latencies at all batch sizes. Our SBW fullvocab schemes exhibit minimal variance (p99 within 7--15\% of p50), while SynthID and KGW show higher tail latency. Fig.~\ref{fig:tail_latency} visualizes the p99 latency across batch sizes.

\begin{figure}[!t]
\centerline{\includegraphics[width=\columnwidth]{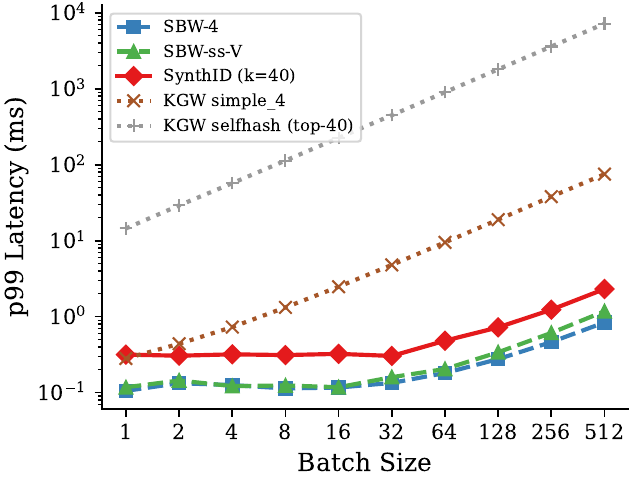}}
\caption{p99 tail latency across batch sizes. SBW-4 and SBW-ss-V maintain low tail latency at all scales.}
\label{fig:tail_latency}
\end{figure}

\begin{table*}[!t]
\caption{Tail Latency (ms): p50 / p95 / p99}
\label{tab:tail_latency}
\begin{center}
\footnotesize
\begin{tabular}{@{}l|rrrrrrrrrr@{}}
\toprule
Method & 1 & 2 & 4 & 8 & 16 & 32 & 64 & 128 & 256 & 512 \\
\midrule
\multicolumn{11}{c}{\textit{p50}} \\
\midrule
SBW-4 & 0.09 & 0.10 & 0.10 & 0.10 & 0.10 & 0.12 & 0.17 & 0.26 & 0.45 & 0.81 \\
SBW-ss-V & 0.10 & 0.11 & 0.11 & 0.11 & 0.11 & 0.13 & 0.19 & 0.32 & 0.60 & 1.15 \\
SynthID (k=40) & 0.29 & 0.29 & 0.29 & 0.29 & 0.29 & 0.28 & 0.47 & 0.71 & 1.22 & 2.31 \\
KGW simple\_4 & 0.26 & 0.42 & 0.71 & 1.27 & 2.40 & 4.69 & 9.26 & 18.6 & 37.5 & 73.9 \\
KGW selfhash (top-40) & 14.1 & 27.9 & 55.8 & 112.2 & 223.2 & 444.9 & 893.7 & 1791 & 3578 & 7151 \\
\midrule
\multicolumn{11}{c}{\textit{p95}} \\
\midrule
SBW-4 & 0.10 & 0.12 & 0.11 & 0.11 & 0.11 & 0.13 & 0.18 & 0.27 & 0.45 & 0.82 \\
SBW-ss-V & 0.11 & 0.13 & 0.12 & 0.12 & 0.12 & 0.14 & 0.20 & 0.33 & 0.61 & 1.17 \\
SynthID (k=40) & 0.30 & 0.30 & 0.30 & 0.31 & 0.31 & 0.30 & 0.48 & 0.72 & 1.23 & 2.31 \\
KGW simple\_4 & 0.28 & 0.43 & 0.72 & 1.31 & 2.46 & 4.78 & 9.45 & 18.8 & 37.8 & 74.7 \\
KGW selfhash (top-40) & 14.3 & 28.5 & 56.5 & 113.0 & 225.0 & 447.4 & 898.4 & 1800 & 3589 & 7184 \\
\midrule
\multicolumn{11}{c}{\textit{p99}} \\
\midrule
SBW-4 & 0.11 & 0.13 & 0.13 & 0.11 & 0.12 & 0.13 & 0.18 & 0.28 & 0.46 & 0.84 \\
SBW-ss-V & 0.12 & 0.14 & 0.12 & 0.12 & 0.12 & 0.16 & 0.20 & 0.34 & 0.61 & 1.18 \\
SynthID (k=40) & 0.32 & 0.31 & 0.32 & 0.31 & 0.32 & 0.30 & 0.48 & 0.72 & 1.24 & 2.31 \\
KGW simple\_4 & 0.28 & 0.44 & 0.73 & 1.32 & 2.48 & 4.80 & 9.50 & 18.9 & 38.0 & 75.0 \\
KGW selfhash (top-40) & 14.5 & 29.1 & 57.1 & 113.3 & 225.7 & 447.9 & 899.5 & 1801 & 3594 & 7191 \\
\bottomrule
\end{tabular}

\end{center}
\vspace{-2mm}
{\small RTX 3090, V=151,936. 200 iterations.}
\end{table*}

\subsection{Memory Overhead}

Table~\ref{tab:memory} reports peak additional GPU memory allocated during watermark computation. Our SBW fullvocab schemes operate in-place on the logits tensor with zero additional allocation. SynthID allocates intermediate tensors proportional to $B \times k \times \text{depth}$.

\begin{table*}[!t]
\caption{Peak Memory Overhead (MB)}
\label{tab:memory}
\begin{center}
\footnotesize
\begin{tabular}{@{}l|rrrrrrr@{}}
\toprule
Method & B=1 & B=8 & B=32 & B=64 & B=128 & B=256 & B=512 \\
\midrule
SBW-4 & 0 & 0 & 0 & 0 & 0 & 0 & 0 \\
SBW-ss-V & 0 & 0 & 0 & 0 & 0 & 0 & 0 \\
\midrule
SynthID (k=40) & 0.08 & 0.27 & 0.32 & 0.43 & 0.86 & 2.1 & 3.4 \\
SBW-4 (k=40) & 0 & 0 & 0 & 0 & 0 & 0 & 0 \\
SBW-ss (k=40) & 0 & 0 & 0 & 0 & 0 & 0 & 0 \\
\bottomrule
\end{tabular}

\end{center}
\vspace{-2mm}
{\small RTX 3090, V=151,936. Delta from baseline (no watermark).}
\end{table*}

\subsection{vLLM Serving Benchmark}
\label{app:vllm}

To validate production deployment overhead, we benchmarked SBW within a vLLM serving stack (Qwen3-8B, 128 requests, 200 output tokens). Table~\ref{tab:vllm_overhead} reports request throughput with and without the SBW logits processor at various concurrency levels.

\begin{table}[h]
\caption{vLLM Serving Overhead}
\label{tab:vllm_overhead}
\begin{center}
\footnotesize
\begin{tabular}{@{}rccc@{}}
\toprule
Concurrency & Baseline (req/s) & SBW (req/s) & Overhead \\
\midrule
1 & 0.23 & 0.23 & $<$0.1\% \\
8 & 1.83 & 1.62 & 11.5\% \\
32 & 6.14 & 6.10 & 0.7\% \\
64 & 9.64 & 9.53 & 1.1\% \\
128 & 13.35 & 13.18 & 1.3\% \\
\bottomrule
\end{tabular}

\end{center}
\vspace{-2mm}
{\small Qwen3-8B, 128 requests, 200 tokens, full-vocabulary self-salt.}
\end{table}

At production concurrency ($\geq$32 concurrent requests), SBW adds ${\sim}$1\% overhead.

\section{Generalization Experiments}
\label{app:generalization}

To validate that SBW's equivalence to KGW generalizes beyond the main experimental setup, we conducted additional experiments varying five factors simultaneously (Table~\ref{tab:generalization_setup}).

\begin{table}[h]
\caption{Generalization Experiment Setup}
\label{tab:generalization_setup}
\begin{center}
\footnotesize
\begin{tabular}{@{}lll@{}}
\toprule
Variable & Main paper & Generalization \\
\midrule
Generation model & Qwen3-8B & Falcon-7B \\
Judge model & Qwen3.5-27B & Mistral-small-3.2-24b \\
Dataset & C4 (web text) & Alpaca (instructions) \\
Sampling & $T{=}1.0$, pure & $T{=}0.7$, $top\_p{=}0.8$ \\
Hardware & RTX 3090 & A6000 \\
\bottomrule
\end{tabular}
\end{center}
\end{table}

We ran 200 samples per scheme at $\delta=2$, $\gamma=0.50$. Table~\ref{tab:generalization} reports two-sample KS tests comparing KGW and SBW z-score distributions. All tests fail to reject equivalence ($p > 0.05$), confirming that the Bernoulli construction's equivalence is independent of model architecture, vocabulary size, prompt distribution, sampling strategy, and hardware.

\begin{table}[h]
\caption{Two-Sample KS Tests: KGW vs SBW on Falcon-7B}
\label{tab:generalization}
\begin{center}
\footnotesize
\begin{tabular}{@{}l|cc|cc@{}}
\toprule
& \multicolumn{2}{c|}{\textit{Watermarked}} & \multicolumn{2}{c}{\textit{Non-watermarked}} \\
Seeding & KS $D$ & $p$ & KS $D$ & $p$ \\
\midrule
Without self-salt & 0.107 & 0.313 & 0.133 & 0.125 \\
With self-salt & 0.114 & 0.662 & 0.080 & 0.676 \\
\bottomrule
\end{tabular}

\end{center}
\vspace{-2mm}
{\small Falcon-7B, Alpaca prompts, $\delta=2$, $\gamma=0.50$, 200 samples.}
\end{table}

\section{Robustness to Attacks}
\label{app:robustness}

To validate that SBW inherits KGW's robustness properties, we evaluated both methods under 19 attack configurations across four categories. \emph{Character-level attacks} randomly substitute or delete individual characters at various fractions (1--20\%). \emph{Structural attacks} truncate the text, keeping only the first 25--75\% of tokens. \emph{Word-level attacks} delete random words (10--30\%) or reorder all words within each sentence. \emph{Semantic attacks} include LLM-based paraphrasing at three intensity levels (Qwen3-4B) and masked language model substitution replacing 10--20\% of words with contextually plausible alternatives (T5). We used 500 watermarked samples per method ($\delta=2$, $\gamma=0.50$, Qwen3-8B).

Table~\ref{tab:robustness} reports the mean z-score after each attack and two-sample KS test p-values comparing the z-score distributions. In 11/19 configurations, the distributions are statistically equivalent ($p > 0.05$). In all 8 non-equivalent cases, SBW retains \emph{more} watermark signal than KGW (higher mean z-score), never less. This is consistent with the hash function analysis in Section~\ref{sec:hash}: the Jenkins hash produces less autocorrelated green-list assignments, so attacks that disrupt local token sequences destroy fewer correlated ``runs'' of green tokens.

\begin{table}[h]
\caption{Robustness: KGW vs SBW under Attacks}
\label{tab:robustness}
\begin{center}
\footnotesize
\resizebox{\columnwidth}{!}{\begin{tabular}{@{}llcccc@{}}
\toprule
Category & Attack & KGW $\bar{z}$ & SBW $\bar{z}$ & KS $p$ & Equiv. \\
\midrule
Char & Subst.\ 1\% & 5.28 & 5.37 & .613 & \checkmark \\
 & Subst.\ 5\% & 3.67 & 3.75 & .413 & \checkmark \\
 & Subst.\ 10\% & 2.40 & 2.36 & .250 & \checkmark \\
 & Subst.\ 20\% & 1.23 & 1.21 & .589 & \checkmark \\
 & Delete 5\% & 3.46 & 3.56 & .534 & \checkmark \\
 & Delete 10\% & 1.96 & 2.11 & .182 & \checkmark \\
 & Delete 20\% & 0.79 & 0.88 & .255 & \checkmark \\
\midrule
Struct & Trunc.\ 25\% & 4.83 & 5.02 & .026 & + \\
 & Trunc.\ 50\% & 4.14 & 4.31 & .107 & \checkmark \\
 & Trunc.\ 75\% & 3.09 & 3.25 & .032 & + \\
\midrule
Word & Delete 10\% & 4.65 & 4.83 & .096 & \checkmark \\
 & Delete 20\% & 4.07 & 4.23 & .050 & + \\
 & Delete 30\% & 3.44 & 3.52 & .559 & \checkmark \\
 & Reorder & 0.70 & 0.96 & {<}.001 & + \\
\midrule
Semantic & Paraph.\ (L) & 1.05 & 1.59 & {<}.001 & + \\
 & Paraph.\ (M) & 1.33 & 1.39 & .842 & \checkmark \\
 & Paraph.\ (H) & 0.72 & 1.37 & {<}.001 & + \\
 & MLM 10\% & 4.48 & 4.71 & .014 & + \\
 & MLM 20\% & 3.81 & 4.01 & .013 & + \\
\bottomrule
\end{tabular}
}
\end{center}
\vspace{-2mm}
{\small 500 samples, $\delta=2$, $\gamma=0.50$. KGW/SBW: mean $z$-score. +: SBW retains more signal.}
\end{table}

\section{Proof of Detection Equivalence}
\label{app:derivation}

\subsection{Setup}

In KGW, the green list $G_t$ at each generation step contains exactly $\gamma|V|$ tokens, selected by a pseudorandom permutation seeded by the context. SBW instead includes each token $i \in V$ independently with probability $\gamma$, producing a stochastic green list proportion:
\begin{equation}
\Gamma_t = \frac{1}{|V|}\sum_{i=1}^{|V|} X_i, \quad X_i \sim \text{Bernoulli}(\gamma) \text{ i.i.d.}
\end{equation}
with $\mathrm{E}[\Gamma_t] = \gamma$ and $\mathrm{Var}(\Gamma_t) = \gamma(1-\gamma)/|V|$.

\subsection{Concentration of Green List Size}

By the Central Limit Theorem, for large $|V|$:
\begin{equation}
\Gamma_t \xrightarrow{d} \mathcal{N}\!\left(\gamma,\; \frac{\gamma(1-\gamma)}{|V|}\right).
\end{equation}
For $\gamma = 0.5$ and $|V| = 151{,}936$, the standard deviation of the green list proportion is $\sigma_\Gamma \approx 1.28$\textperthousand, corresponding to ${\approx}195$ tokens out of ${\approx}75{,}968$.

Hoeffding's inequality provides a finite-sample bound:
\begin{equation}
P(|\Gamma_t - \gamma| > \epsilon) \leq 2\exp(-2|V|\epsilon^2).
\end{equation}
For $\epsilon = 0.01$ (a 1\% deviation): $P(|\Gamma_t - \gamma| > 0.01) \leq 2\exp(-2 \cdot 151{,}936 \cdot 10^{-4}) < 2 \times 10^{-13}$.

\subsection{Symmetry Argument}

Let $Y_t = 1$ if the generated token $s_t$ belongs to $G_t$, and $Y_t = 0$ otherwise. Under $H_0$, the model selects $s_t$ independently of $G_t$. We claim $P(Y_t = 1 \mid \Gamma_t) = \Gamma_t$.

\begin{proof}
By the i.i.d.\ Bernoulli construction, conditioned on $|G_t| = m$, each token $v$ has equal probability $m/|V|$ of being in $G_t$, regardless of which specific tokens were selected. Therefore:
\begin{equation}
P(Y_t = 1 \mid |G_t| = m) = \sum_{v \in V} p_v \cdot \frac{m}{|V|} = \frac{m}{|V|},
\end{equation}
where $p_v$ is the model's probability of generating token $v$. Since $\Gamma_t = m/|V|$, we have $P(Y_t = 1 \mid \Gamma_t) = \Gamma_t$.
\end{proof}

By the Law of Total Expectation, the mean follows immediately: $\mathrm{E}[Y_t] = \mathrm{E}[\mathrm{E}[Y_t \mid \Gamma_t]] = \mathrm{E}[\Gamma_t] = \gamma$.

\subsection{Law of Total Variance}

We decompose $\mathrm{Var}(Y_t)$ by conditioning on $\Gamma_t$:
\begin{equation}
\mathrm{Var}(Y_t) = E[\mathrm{Var}(Y_t \mid \Gamma_t)] + \mathrm{Var}(E[Y_t \mid \Gamma_t]).
\end{equation}

\textbf{First term (conditional variance).} Given $\Gamma_t$, the indicator $Y_t$ is Bernoulli with parameter $\Gamma_t$, so $\mathrm{Var}(Y_t \mid \Gamma_t) = \Gamma_t(1 - \Gamma_t)$. Taking expectations:
\begin{equation}
E[\mathrm{Var}(Y_t \mid \Gamma_t)] = E[\Gamma_t - \Gamma_t^2] = \gamma - E[\Gamma_t^2].
\end{equation}

\textbf{Second term (variance of the conditional mean).} Since $E[Y_t \mid \Gamma_t] = \Gamma_t$:
\begin{equation}
\mathrm{Var}(E[Y_t \mid \Gamma_t]) = \mathrm{Var}(\Gamma_t).
\end{equation}

\textbf{Combining.} Using $E[\Gamma_t^2] = \mathrm{Var}(\Gamma_t) + \gamma^2$:
\begin{align}
\mathrm{Var}(Y_t) &= \gamma - \mathrm{Var}(\Gamma_t) - \gamma^2 + \mathrm{Var}(\Gamma_t) \nonumber\\
                   &= \gamma(1 - \gamma).
\end{align}

The $\mathrm{Var}(\Gamma_t)$ terms cancel exactly. This holds for \emph{any} distribution of $\Gamma_t$ with mean $\gamma$, not only the Bernoulli case. The marginal distribution of $Y_t$ is therefore $\text{Bernoulli}(\gamma)$, identical to the fixed-size KGW green list.

\subsection{Independence of the $Y_t$'s}

The z-score requires that the $Y_t$'s be mutually independent. Under $H_0$:
\begin{itemize}
\item Each $Y_t$ depends only on $G_t$ (determined by a pseudorandom seed $r_t$) and on the model's token choice $s_t$.
\item The model's choice $s_t$ is independent of all green lists by assumption ($H_0$).
\item Distinct seeds $r_t \neq r_{t'}$ produce independent RNG outputs, so $G_t \perp G_{t'}$.
\end{itemize}
Therefore $Y_t \perp Y_{t'}$ for $t \neq t'$, and $W = \sum_{t=1}^T Y_t \sim \text{Binomial}(T, \gamma)$.

\end{document}